\documentclass[12pt]{article}
\usepackage{amsmath}
\usepackage{amssymb}
\usepackage{amsthm}
\usepackage{fullpage}
\usepackage[utf8]{inputenc}
\usepackage{mathtools}
\usepackage{url}
\usepackage[usenames,svgnames]{xcolor}
\usepackage{cite}

\usepackage[pagebackref=false]{hyperref} 
\usepackage[all]{hypcap} 

\theoremstyle{plain}
\newtheorem{theorem}{Theorem}[section]
\newtheorem{lemma}[theorem]{Lemma}

\newtheorem{example}{Example}[section]
\newtheorem{examples}{Example}[subsection]

\newtheorem{remark}{Remark}[section]

\theoremstyle{definition}
\newtheorem{definition}{Definition}[section]

\numberwithin{equation}{section} 

\hypersetup{
breaklinks,
colorlinks,
citecolor=Green,
linkcolor=BlueViolet,
urlcolor=DarkBlue,
pdftitle={Quantum Hurwitz numbers and MacDonald polynomials},
pdfauthor={John Harnad }, }


\DeclareMathOperator{\cyc}{cyc}

\newcommand{\id}{\mathrm{Id}}

\DeclareMathOperator{\aut}{aut}

\DeclarePairedDelimiter{\abs}{|}{|}


\def\be{\begin{equation}}
\def\ee{\end{equation}}

\def\bea{\begin{eqnarray}}
\def\eea{\end{eqnarray}}

\def\bt{\begin{theorem}}
\def\et{\end{theorem}}

\def\bex{\begin{example}\small \rm}
\def\eex{\end{example}}

\def\bexs{\begin{examples}\small \rm}
\def\eexs{\end{examples}}

\def\ra{\rightarrow}
\def\ss{\subset}
\def\deq{\coloneqq}

\def\br{\begin{remark}\small \rm}
\def\er{\end{remark}}


\def\&{&{\hskip -20pt}}


\def\JJ{\mathcal{J}}

\def\ZZ{\mathcal{Z}}

\def\Cb{\mathbf{C}}

\def\Nb{\mathbf{N}}
\def\Pb{\mathbf{P}}
\def\Rb{\mathbf{R}}

\def\Zb{\mathbf{Z}}

\hyphenation{applying}
\hyphenation{number}
\hyphenation{coverings}
\hyphenation{content}
\hyphenation{matrix}
\hyphenation{product}
\hyphenation{proposition}
\hyphenation{without}
\hyphenation{second}
\hyphenation{Riemann}
\hyphenation{Hurwitz}
\hyphenation{assoc-iated}
\hyphenation{multi-species}

\begin{document}
\baselineskip 16pt
\medskip
\begin{center}
\begin{Large}\fontfamily{cmss}
\fontsize{17pt}{27pt}
\selectfont
\textbf{Quantum Hurwitz numbers and Macdonald polynomials}\footnote{Work of J.H. supported by the Natural Sciences and Engineering Research Council of Canada (NSERC) and the Fonds Qu\'ebecois de la recherche sur la nature et les technologies (FQRNT). }
\end{Large}\\
\bigskip
\begin{large}  {J. Harnad}
 \end{large}
\\
\bigskip
\begin{small}
{\em Centre de recherches math\'ematiques,
Universit\'e de Montr\'eal\\ C.~P.~6128, succ. centre ville, Montr\'eal,
Qu\'ebec, Canada H3C 3J7 }\\
\smallskip
{\em Department of Mathematics and
Statistics, Concordia University\\ 1455 de Maisonneuve Blvd. W. 
Montreal, Quebec,  Canada H3G 1M8 } 
\end{small}
\end{center}
\bigskip

\begin{abstract}
Parametric families in   the center $\Zb(\Cb[S_n])$  of the group algebra of the symmetric group are 
obtained by identifying  the indeterminates in the generating function for Macdonald polynomials  as  
commuting Jucys-Murphy elements.  Their eigenvalues  provide coefficients in the double Schur function 
expansion  of  2D Toda  $\tau$-functions of  hypergeometric type.   Expressing these in the basis of  products of power sum symmetric functions,  the coefficients may be interpreted geometrically  as  parametric families
of quantum Hurwitz numbers, enumerating weighted branched coverings of the Riemann sphere. Combinatorially, they give quantum weighted sums over paths in the Cayley graph of $S_n$  generated by transpositions.   Dual pairs of bases for the algebra of   symmetric functions with  respect to the scalar product in which the Macdonald polynomials  are orthogonal  provide both  the geometrical and combinatorial significance of  these 
quantum  weighted enumerative invariants.
   \end{abstract}

\tableofcontents

\section{Introduction: weighted Hurwitz numbers }

A new method for constructing parametric families of 2D Toda $\tau$-functions \cite{Ta, Takeb, UTa} of hypergeometric type \cite{OrS}   that serve as generating functions for various  types of weighted Hurwitz numbers was developed  in \cite{GH1, GH2, H1,  H2, HO2}.  This was originally inspired by the work of Pandharipande \cite{P} and Okounkov \cite{Ok}, which first used a special case of KP and 2D-Toda $\tau$-functions as  generating functions for single and double  for Hurwitz numbers  when all branchings other than the ones specified at one or two points rare equired to be simple, and the weighting for these is uniform. The general case  gives infinite parametric families of weighted  enumerations of $n$-fold  branched coverings  of the  Riemann sphere or, equivalently, weighted  paths in  the Cayley graph  of the symmetric group $S_n$  generated by transpositions. They are  derived  from  parametric families of weight generating functions by defining associated symmetric  functions of an arbitrary number of indeterminates multiplicatively.  Replacing one set of  indeterminates in  the Cauchy-Littlewood  generating function \cite{Mac}  by the commuting elements of the group algebra introduced   by Jucys \cite{Ju} and Murphy \cite{Mu}, while evaluating  the  other set at parameter values  defining the weightings provides parametric families of elements of the center  $\Zb(\Cb[S_n])$ of the  group algebra. Expanding these as sums over  products of dual bases of the algebra of symmetric functions,  and applying them multipicatively  to the basis of  the center of the group algebra consisting of cycle-type sums $\{C_\mu\}$ leads to an identification  of both the geometrical significance of the weighted  Hurwitz numbers and the combinatorial  one.  
 
It was shown in \cite{GH1, GH2, HO2}  that all previously studied examples  of generating functions 
for Hurwitz numbers    \cite{P, Ok, GGN1, GGN2, BEMS, Z, KZ, AMMN, AC1,  AC2}  may be viewed as special cases of 
this general construction, and several new examples were introduced,  including three forms of quantum 
Hurwitz number  \cite{GH2} and their multispecies generalization \cite{H1}. Other notions of weighted or 
quantum Hurwitz numbers  have also been considered, including those for branched 
coverings of $\Rb \Pb^2$, whose generating functions are BKP $\tau$-functions \cite{NOr1, NOr2}, 
and Hurwitz numbers enumerating factorization of Singer cycles \cite{LRS}. 

  In the following, we extend the special class of  weighted Hurwitz numbers 
 introduced in \cite{GH1, GH2} by introducing an additional pair $(q,t)$ of deformation parameters
 in the definition of the weight generating functions. The result is to replace the Cauchy-Littlewood  formula,
 which generates  dual bases in the algebra of the symmetric functions with respect to the standard scalar product pairing by the corresponding one  for MacDonald   polynomials \cite{Mac}.   In \autoref{hypergeometric_tau} the general method is developed,  and used to derive an infinite  parametric family of 2D Toda $\tau$-functions of hypergeometric type depending not only on the previously introduced classical weight determining parameters, but also the additional pair $(q,t)$ of quantum deformation parameters entering in the definition of the scalar product. These are shown to be generating functions  for infinite parametric  families of quantum weighted Hurwitz numbers when expanded in the basis of  products of   power sum symmetric functions. The combinatorial significance, in terms of quantum weighted paths in the Cayley graph, is derived in \autoref{combinatorial_hurwitz}, and the  geometric one,  in terms of quantum weighted enumeration of branched  covers, in \autoref{geometric_hurwitz}.   \autoref{examples} is devoted to various examples obtained by  specialization of the parameters and taking limits. It is shown  how all previously studied cases of weighted   Hurwitz numbers, whether classical or quantum, may be recovered within this more general setting,   and a number of new families  are added,  including those associated  to Hall-Littlewood  and Jack polynomials.
 
\section{2D Toda $\tau$-functions,  MacDonald polynomials and quantum weighted Hurwitz numbers}
\label{hypergeometric_tau}

\subsection{The generating function for Macdonald polynomials}

Following \cite{Mac} (sec. VI. 2), we define a $2$-parameter family of scalar products $(\ , \ )_{(q, t)}$ on the algebra $\Lambda$ 
of symmetric functions in an infinite number of indeterminates ${\bf x} = (x_1, x_2, \dots  )$, such that
the power sum symmetric functions are orthogonal; 
\be
( p_\lambda, p_\mu)_{(q, t)} := z_\mu(q,t) \delta_{\mu, \nu}
\ee
where 
\be
p_\lambda := \prod_{i=1}^{\ell(\lambda} p_{\lambda_i}  \in  \Lambda, \qquad
p_j  := \sum_{i}x_i^j, \quad j\in \Nb
\ee
are the power sum symmetric functions corresponding to the integer partition
\be
\lambda= \lambda_1 \ge \cdots \ge \lambda_{\ell(\lambda)}
\ee
 of length $\ell(\lambda)$. The normalization factor $z_\mu(q,t)$ is defined as
 \be
z_\mu(q,t) := z_\mu n_\mu (q, t),  \quad 
z_\mu :=  \prod_{i=1}^{\mu} i^{m_i(\mu)}(m_i(\mu))! .
\ee
where $m_i(\mu)$ is the number of parts of $\mu$ equal to $i$
and
\be
n_\mu(q, t) :=  \prod_{i=1}^{\ell(\mu)} {1 - q^{\mu_i} \over 1- t^{\mu_i}}.
\ee

 The Macdonald polynomials  $\{P_\lambda({\bf x}, q, t)\}$ may be defined  \cite[Chapt. VI]{Mac} as
  the unique basis for $\Lambda$ determined  by  two conditions: orthogonality with respect to
  the scalar product $( \, , \, )_{(q, t)}$
  \be
  (P_\lambda, P_\mu)_{(q, t)} = 0 \quad \text{  if } \lambda \neq \mu,
  \ee
  and lower  triangular normalized decomposition  (with respect to the dominance partial ordering \cite[Sec. I.1, pg. 7]{Mac})
 in the basis $\{ m_\lambda\}$ of monomial symmetric functions
  \be
  P_\lambda({\bf x}, q,t ) = m_\lambda + \sum_{\mu < \lambda} K_{\lambda \mu} (q,t) m_\mu ({\bf x}).
  \ee
  
 The generating function \cite{Mac}
  \be
  \Pi({\bf x}, {\bf y}, q, t) := \prod_{ij} {(t x_i y_j; q)_\infty \over (x_i y_j; q)_\infty}
  \label{Pi_xy_qt}
  \ee
    where
  \be
  (u; q)_\infty := \prod_{k=0}^\infty (1 - u q^k)
  \ee
  is the (infinite) quantum Pochhammer symbol, has the following alternative expansions \cite[Sec.~VI.2]{Mac}
    in terms of products of symmetric functions in the pair of infinite sequences
  of determinate ${\bf x}= (x_1, x_2, \dots )$, ${\bf y } = (y_1, y_2, \dots )$,
  \bea
  \Pi({\bf x}, {\bf y}, q, t)&\& = \sum_\lambda b_\lambda(q,t) P_\lambda ({\bf x}, q,t)  P_{\lambda} ({\bf y}, q,t)  \\
  &\& = \sum_\lambda  g_\lambda ({\bf x}, q,t) \,  m_\lambda ({\bf y})  \\
  &\& = \sum_\lambda m_\lambda ({\bf x}) \,  g_\lambda ({\bf y}, q,t) ,  
  \eea
  where
  \be
  b_\lambda(q,t) := (P_\lambda, P_\lambda)_{(q, t)}^{-1}
  \ee
    and
  \be
g_\lambda ({\bf x}, q,t) := \prod_{i=1}^{\ell(\lambda)} g_{\lambda_i} ({\bf x}, q,t),
\ee
where
\be
 g_j({\bf x}, q,t):= b_{(j)}(q,t) P_{(j)} ({\bf x}, q,t) = \sum_{\mu, \, |\mu|=j} z_\mu(q,t)^{-1} p_\mu({\bf x}).
  \ee
The basis $\{ g_\lambda({\bf x}, q, t)$ provides  the $(q,t)$ analog of the elementary $\{e_\lambda\}$ and complete 
$\{h_\lambda\}$ symmetric  function basis \cite{Mac}, interpolating between them  in the case of Hall polynomials ($q=0$).

\subsection{Quantum weight generating function}

We now proceed as in \cite{GH1, GH2}  to define parametric families within the center  $\Zb(\Cb[S_n])$ 
of the group algebra  $\Cb[S_n]$ by identifying the indeterminates $(x_1, x_2, \dots )$ with
a given set of constants $(c_1, c_2, \dots)$ and the $(y_1, y_2, \dots )$ with $z$ times
the  commuting Jucys-Murphy elements $\JJ:= (\JJ_1, \dots, \JJ_n)$ of $\Cb[S_n]$ \cite{Ju, Mu, DG}, defined as  :
\be
\JJ_b\deq \sum_{a=1}^{b-1}(a\,b), \quad b=1, \dots, n, \quad n \in \Nb^+.
\ee

We define the {\em quantum weight generating function}  as
\be
M(q,t, {\bf c}, z ) := \ \prod_{i=1}^\infty M(q,t, z c_i) =  \sum_{j=0}^\infty g_j({\bf c}, q,t) z^j.
\label{M_qtcz}
\ee
where
\be
M(q, t,  z):= {(tz; q)_{\infty} \over (z; q)_{\infty}} =  \prod_{k=0}^\infty {1 - t  z q^k  \over 1- z q^k }.
\label{M_qtz}
\ee

The Jucys-Murphy elements generate a commuting subalgebra of the group
algebra $\Cb[S_n]$, and any symmetric polynomial in them is in the center  $\Zb(\Cb[S_n])$,
The resulting central element $ M_n(q,t, {\bf c}, z\JJ)  \in  \Zb(\Cb[S_n])$ is 
\bea
 M_n(q,t, {\bf c}, z\JJ) &\&:=  \prod_{a=1}^n M(q,t, {\bf c}, z\JJ_a) = \Pi({\bf c}, z\JJ, q, t)  
\label{Pihat}
\\
&\& = \sum_\lambda z^{|\lambda |} g_\lambda ({\bf c}, q,t)  m_\lambda (\JJ) 
\label{Pihat_combinatorial} \\
&\& = \sum_\lambda z^{|\lambda |}m_\lambda ({\bf c})  g _\lambda (\JJ, q,t).
\label{Pihat_geometrical} 
\eea

  
\subsection{Bases for $\Zb(\Cb[S_n])$ and the eigenvalues of $M_n(q, t, {\bf c},   z\JJ)$}
Proceeding as in \cite{GH1, GH2, HO2}, we make use of two standard bases of $\Zb(\Cb[S_n])$,
both labelled by partitions of $n$. The first consists of the cycle-type sums $\{C_\mu\}$:
\be
C_\mu = \sum_{h\in \cyc(\mu)} h,
\ee
where $\cyc(\mu) \ss S_n$ denotes the conjugacy class consisting of elements whose cycle
lengths are equal to the parts $\mu_i$ of the partition $\mu$. 

The second consists of the orthogonal idempotents $\{F_\lambda\}$,  corresponding to the irreducible representations of $S_n$,  labelled by partitions $\lambda$ of weight $|\lambda|=n$. These are linearly related to the cycle-type sums through  the equivalent, under the characteristic map, of the Frobenius character formula \cite{Frob, FH} 
\be
F_\lambda = h_\lambda^{-1}  \sum_{\mu, \, |\mu|=|\lambda|} \chi_\lambda(\mu) C_\mu, 
\quad  C_\mu =  z_\mu^{-1}\sum_{\lambda, \, \lambda| = |\mu|} \chi_\lambda(\mu)h_\lambda F_\lambda.
\label{F_lambda_C_mu}
\ee
Here $\chi_\lambda(\mu)$ is the  character of the irreducible representation of Young symmetry type $\lambda$,
evaluated on the class of cycle type $\mu$,  and 
\be
h_\lambda  := \det\left( {1 \over (\lambda_i - i +j)!}\right)^{-1}
\ee
is the product of the hook lengths of the Young diagram corresponding to the partition $\lambda$.

The elements  $F_\lambda$  satisfy the orthogonality relations
\be
F_\lambda F_\mu = F_\lambda  \ \delta_{\lambda \mu},
\label{F_lambda_orthog}
\ee
which imply that all elements of $\Zb(\Cb[S_n])$ act diagonally under multiplication in this base.
Eqs.~(\ref{F_lambda_C_mu}) and  (\ref{F_lambda_orthog}) imply
\be
C_\mu F_\lambda = {h_\lambda\chi_\lambda(\mu) \over z_\mu} F_\lambda,
\label{C_mu_F_lambda}
\ee
which means that the eigenvalue of $C_\mu$ on the basis element $F_\lambda$ is
the {\it central character}
\be
\phi_\lambda(\mu):={h_\lambda\chi_\lambda(\mu) \over z_\mu} .
\label{central_char}
\ee

It is a basic property \cite{Ju, Mu, DG}  that the eigenvalues of any central element
 $G(\JJ)  \in \Zb(\Cb[S_n])$  formed from a symmetric function
 $G\in \Lambda$ by identifying the indeterminates with the Jucys-Mulrphy elements
 are given by  evaluating $G$ on the content $\{j-i\}_{(i,j) \in \lambda}$
 of the partition $\lambda$
 \be
 G(\JJ)F_\lambda = G(\{j-i\}) F_\lambda, \quad (i,j) \in \lambda. 
 \ee
In particular, if $G({\bf x})$ is formed from a product of the same expression in each of the
 indeterminates ${\bf x} = (x_1, x_2, \dots)$
 \be
 G({\bf x}) = \prod_{i} g(x_i),
 \ee
 the eigenvalues of $G(\JJ)$ are given by the content product formula 
 \be
G(\JJ)F_\lambda = \prod_{(i,j) \in \lambda} g(j-i)F_\lambda.
 \label{content_product_eigenvalue}
 \ee
 
It follows that the eigenvalues $r_\lambda^{M(q, t, {\bf c}, z)}$ of  $M_n( q, t, {\bf c}, z\JJ) $ are
 given by the content product formula obtained from evaluation of the  generating function $M(q,t,{\bf c}, z)$
 at $\{z(j-i)\}$
 \bea
M_n(q,t, {\bf c}, z\JJ)  F_\lambda &\&= r_\lambda^{M(q,t, {\bf c}, z)}F_\lambda, 
 \label{Pi_hat_r_lambda}\\
r_\lambda^{M(q, t, {\bf c}, z)}  &\&=\prod_{(i, j)\in \lambda} M(q,t, {\bf c}, z(j-i)) 
=\prod_{(i,j)\in \lambda}\prod_{k=1}^\infty {(tz(j-i) c_k; q)_\infty \over (z(j-i)c_k; q)_\infty }.
\label{r_lambda_q_p}
\eea

More generally, for an arbitrary integer $N \in \Zb$, we define
\be
r_\lambda^{M(q,t, {\bf c}, z)}(N) := r_0^{M(q,t, {\bf c}, z)}(N)\prod_{(i,j)\in \lambda} M(q,t, {\bf c}, z(N+j-i)),
\label{r_lambda_G} 
 \ee
where
\bea
r^{M(q,t, {\bf c}, z)}_0(N)  &\&:= \prod_{j=1}^{N-1} M(q,t, {\bf c}, (N-j)z),  \quad r_0^{M(q,t, {\bf c}, z)}(0)  := 1, 
\cr
 r^{M(q,t, {\bf c}, z)}_0(-N) &\&:= \prod_{j=1}^{N} M^{-1}(q,t, {\bf c}, (j-N)z),
\quad N\geq 1,
\label{content_product_G_N}
\eea
and hence 
\be
r_\lambda^{M(q,t, {\bf c}, z)} = r_\lambda^{M(q,t, {\bf c}, z)} (0) .
\ee

\subsection{The 2D Toda $\tau$-function $\tau^{M(q,t,  {\bf c}, z)}(N, {\bf t}, {\bf s})$ as 
generating function for double quantum Hurwitz numbers $F^d_{M(q,t, {\bf c})} (\mu, \nu)$,  $H^d_{M(q,t, {\bf c})} (\mu, \nu)$}
\label{double_quantum_hurwitz}

The general theory  \cite{Ta, Takeb, UTa, OrS, HO1} implies that the following diagonal
double Schur function  expansion 
\be
\tau^{M(q,t, {\bf c}, z)}(N, {\bf t}, {\bf s}) := \sum_{\lambda} r_\lambda^{M(q,t, {\bf c}, z)}(N) s_\lambda({\bf t}) s_\lambda({\bf s}),
\label{tau_G_double_schur}
\ee
defines a 2D Toda $\tau$-function of  hypergeometric type, where
\be
{\bf t} = (t_1, t_2, \dots), \quad {\bf s} = (s_1, s_2, \dots)
\ee
are the 2D Toda flow variables, which may be identified in this notation in terms of the power sums 
\be
t_i = \frac{p_i}{i}, \quad s_i = \frac{p'_i}{i}
\ee
in two independent sets of variables.

We now apply the procedure developed in \cite{GH2} for deriving both the geometrical and 
combinatorial versions of weighted Hurwitz numbers associated to  a weight generating function $G(z)$.
Recall that the pure Hurwitz numbers $H(\mu^{1)}, \dots \mu^{(k)})$ may be viewed either
as the number of $n$-sheeted branched coverings of the Riemann sphere having $k$ branch points with
ramification profiles given by the partitions $\{\mu^{i)}\}_{i=1,\dots k}$ of lengths $|\mu^{(i)}| =n$, weighted
by the inverse of the order of the automorphism group or, equivalently, as the number of ways in which the
identity element in $S_n$ can be expressed as a product of elements belonging to the conjugacy
classes $\{\cyc(\mu^{i)}\}$.  A convenient way to express the latter is through the formula
\be
 H(\mu^{(1)}, \dots, \mu^{(k)}) = \frac{1}{n!} [\id] \prod_{i=1}^k C_{\mu^{(i)}},
\label{combin_hurwitz_id}
\ee
where $[\id] $ means taking the component of the identity element within the cycle sum 
basis $\{C_\mu\}$ of $\ZZ(\Cb[S_n])$ or, more generally,
\be
 \prod_{i=1}^k C_{\mu^{(i)}} = \sum_{\nu, \, |\nu| = |\mu^{(i)}|} H(\mu^{(1)}, \dots , \mu^{(k)} ,\nu)\,  z_\nu C_\nu, 
\label{combin_hurwitz}
\ee
which is equivalent to the Frobenius-Schur formula  (see \cite[Appendix~A]{LZ}) as shown in \cite[Sec. 5.2]{GH2} )
\be
H(\mu^{(1)}, \dots, \mu^{(k)}) =\sum_{\lambda} h_\lambda^{k-2} \prod_{i=1}^k \frac{\chi_\lambda(\mu^{(i)})}{z_{\mu^{(i)}}}, 
\label{Frob_Schur}
\ee

Following \cite{GH2, HO2}, we now consider  two notions of quantum weighted Hurwitz numbers 
associated to the generating function (\ref{M_qtcz}): combinatorial and geometrical. 


\subsubsection{Combinatorial quantum weighted Hurwitz numbers \cite{GH2}}

\begin{definition}{\em Signature of paths  \cite{GH2}.}
For every $d$-step path in the Cayley graph of $S_n$ generated by transpositions,  $(a,b)$, $a< b$, starting at the 
conjugacy class $\cyc(\nu)$ and ending at the class $\cyc(\mu)$, define its {\em signature}
$\lambda$ as the partition of weight $|\lambda|=d$ whose parts are
equal to the number of times a particular second element $b=1, \dots, n$ appears
amongst the sequence of transpositions $(a_1b_1) \cdots (a_d b_d)$ forming
the path from an element $h\in \cyc(\nu)$ to $(a_1 b_1) \cdots (a_d b_d)h \in \cyc(\mu)$.
\end{definition}

We recall the following Lemma from \cite[Lemma 2.3]{GH2}
\begin{lemma}
\label{generating_weighted_paths}
Multiplication by $m_\lambda(\JJ)$ defines an endomorphism of $\Zb(\Cb[S_n])$ which, expressed in the $\{C_\mu\}$ basis, is given by
\be
m_\lambda(\JJ) C_\mu = \sum_{\nu, \, \abs{\nu}=\abs{\mu}} m^\lambda_{\mu \nu} \frac{z_\nu}{\abs{\nu}!} C_\nu,
\ee
where 
\be
\tilde{m}^\lambda_{\mu \nu}: = {|\lambda|! \over \prod_{i=1}^{\ell(|\lambda|} \lambda_i !}m^\lambda_{\mu \nu}
\ee
 is the total number of $\abs{\lambda}$-step paths in the Cayley graph of $S_n$ from 
$\cyc(\nu)$ to $\cyc(\mu)$ with signature $\lambda$.
\end{lemma}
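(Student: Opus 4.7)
The plan is to expand $m_\lambda(\JJ)$ explicitly as a weighted sum of ordered products of transpositions, apply this expansion to $C_\mu$, and use centrality to collect by cycle type. The key combinatorial identity is
\[
m_\lambda(\JJ) \;=\; \frac{\prod_{i}\lambda_{i}!}{|\lambda|!}\sum_{(b_1,\dots,b_d)} \JJ_{b_1}\cdots \JJ_{b_d},
\]
where $d=|\lambda|$ and the sum runs over ordered tuples $(b_1,\dots,b_d)\in\{1,\dots,n\}^d$ whose multiset of multiplicities equals the multiset of parts of $\lambda$. This is the transcription to the commuting Jucys--Murphy elements of the elementary identity $m_\lambda(\xb)=\sum_{\alpha}\xb^{\alpha}$, summed over distinct rearrangements $\alpha$ of the parts of $\lambda$ (padded with zeros), together with the observation that each such $\alpha$ is produced by exactly $|\lambda|!/\prod_{i}\lambda_{i}!$ ordered tuples.

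Substituting $\JJ_b=\sum_{a<b}(a,b)$ and expanding yields
\[
m_\lambda(\JJ) \;=\; \frac{\prod_{i}\lambda_{i}!}{|\lambda|!}\sum_{(\tau_1,\dots,\tau_d)\,:\,\mathrm{sig}=\lambda}\tau_1\cdots\tau_d,
\]
where the sum is over ordered sequences of transpositions whose sequence of larger endpoints has multiplicity profile $\lambda$. By the definition preceding the lemma, a pair $(h,(\tau_1,\dots,\tau_d))$ is precisely a signature-$\lambda$ path starting at $h$ with endpoint $\tau_1\cdots\tau_d h$. Hence
\[
m_\lambda(\JJ)\,C_\mu \;=\; \frac{\prod_{i}\lambda_{i}!}{|\lambda|!}\sum_{g\in S_n} N_g\, g,
\]
where $N_g$ counts the signature-$\lambda$ paths that begin at some $h\in\cyc(\mu)$ and end at $g$.

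To finish, use the basic fact (already recalled in the excerpt) that every symmetric function of the Jucys--Murphy elements lies in $\Zb(\Cb[S_n])$, so $m_\lambda(\JJ)\,C_\mu$ is central and $N_g$ depends on $g$ only through its cycle type $\nu$. Summing over $g\in\cyc(\nu)$ gives the total number of signature-$\lambda$ paths from $\cyc(\mu)$ to $\cyc(\nu)$. Path reversal $(\tau_1,\dots,\tau_d)\mapsto(\tau_d,\dots,\tau_1)$ preserves the multiset of transpositions, hence the signature, and exchanges the roles of source and target; it therefore identifies this count with $\tilde{m}^\lambda_{\mu\nu}$, the total from $\cyc(\nu)$ to $\cyc(\mu)$. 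Since $|\cyc(\nu)|=n!/z_\nu$, this yields $N_g=\tilde{m}^\lambda_{\mu\nu}\,z_\nu/n!$ for $g\in\cyc(\nu)$, and substitution gives
\[
m_\lambda(\JJ)\,C_\mu \;=\; \sum_{\nu}\frac{\prod_{i}\lambda_{i}!}{|\lambda|!}\,\tilde{m}^\lambda_{\mu\nu}\,\frac{z_\nu}{|\nu|!}\,C_\nu,
\]
which is the claimed formula with $m^\lambda_{\mu\nu} = (\prod_{i}\lambda_{i}!/|\lambda|!)\,\tilde{m}^\lambda_{\mu\nu}$.

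The main obstacle is the first displayed identity: one must count ordered tuples against distinct rearrangements so that the multinomial factor $|\lambda|!/\prod_{i}\lambda_{i}!$ appears correctly---precisely the factor that reconciles the coefficient $m^\lambda_{\mu\nu}$ appearing in the expansion with the path count $\tilde{m}^\lambda_{\mu\nu}$. Once this normalization is settled, the remainder is an invocation of centrality together with the bijective path-reversal argument.
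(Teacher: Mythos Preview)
Your argument is correct. The paper does not actually prove this lemma---it is quoted verbatim from \cite[Lemma~2.3]{GH2}---so there is no in-paper proof to compare against; your direct expansion of $m_\lambda(\JJ)$ via the multinomial identity on ordered tuples, followed by the appeal to centrality and the path-reversal bijection to match the stated direction of the path count, is a clean and self-contained derivation.
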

Combining this with (\ref{Pihat_combinatorial}) gives
\be
M_n(q,t, {\bf c}, z\JJ) \, C_\mu = \sum_{d=0}^\infty z^d\sum_{\nu, |\nu|=|\mu|=n} F^d_{M(q,t,{\bf c})} (\mu, \nu) z_\nu C_\nu,
\label{Pi_Cmu_F_Gd}
\ee
where
\be
F^d_{M(q,t,{\bf c})}(\mu, \nu) \deq {1 \over \abs{n}!} \sum_{\lambda, \ \abs{\lambda}
=d} g_\lambda({\bf c},q,t) m^\lambda_{\mu \nu}
\label{Fd_Mqt_def}
\ee
is the quantum weighted combinatorial Hurwitz number for such paths, and
\be
M(q,t,{\bf c}) := M(q,t,{\bf c}, z)\vert_{z=1}.
\ee
(Note that, whereas the infinite product (\ref{M_qtcz}) defining the generating function $M(q,t, {\bf c}, z)$ need not necessarily
represent a convergent power series in $z$, if $|q|<1$ and $|c_i|<1$ for all $i$, these are i, in fact, convergent for all values of $z$.)

Combining this with the results of the previous section leads to our first main result: the 2D Toda
$\tau$-function defined in (\ref{tau_G_double_schur})  for $N=0$ is the generating function
 for the  quantum weighted combinatorial  Hurwitz number  (\ref{Fd_Mqt_def}).
\begin{theorem}
\label{combinatorial_hurwitz}
Expanding $ \tau^{M(q,t, {\bf c}, z)}({\bf t}, {\bf s}) :=\tau^{M(q,t, {\bf c})}(0, {\bf t}, {\bf s})$ in the basis consisting of products of power sum symmetric functions, the coefficients are the combinatorial quantum Hurwitz numbers  (\ref{Fd_Mqt_def}).
\be
\tau^{M(q,t, {\bf c}, z)} ({\bf t}, {\bf s})
= \sum_{d=0}^\infty \sum_{\substack{\mu, \nu \\ \abs{\mu} =
 \abs{\nu}}} z^d F^d_{M(q,t,{\bf c})}(\mu, \nu) p_\mu({\bf t}) p_\nu({\bf s})
\label{tau_Mqt_F}
\ee
\end{theorem}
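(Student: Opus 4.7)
The plan is to show that the double Schur expansion (\ref{tau_G_double_schur}) and the combinatorial expansion (\ref{Pi_Cmu_F_Gd}) of $M_n(q,t,\mathbf{c},z\JJ)$ on the class sum basis are two sides of the same identity, linked by the Frobenius character formula (\ref{F_lambda_C_mu}) and character orthogonality.

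First, I would rewrite each Schur function in (\ref{tau_G_double_schur}) at $N=0$ in the power sum basis using $s_\lambda = \sum_\mu z_\mu^{-1} \chi_\lambda(\mu) p_\mu$. Collecting terms yields
\be
\tau^{M(q,t,\mathbf{c},z)}(\mathbf{t},\mathbf{s}) = \sum_{\mu,\nu,\,|\mu|=|\nu|} p_\mu(\mathbf{t})\, p_\nu(\mathbf{s}) \sum_{\lambda,\,|\lambda|=|\mu|} \frac{r_\lambda^{M(q,t,\mathbf{c},z)}\,\chi_\lambda(\mu)\,\chi_\lambda(\nu)}{z_\mu z_\nu},
\ee
so the theorem reduces to showing the inner sum equals $\sum_d z^d F^d_{M(q,t,\mathbf{c})}(\mu,\nu)$.

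Second, I would obtain a closed form for that inner sum by acting on $F_\lambda$ on both sides of (\ref{Pi_Cmu_F_Gd}). The left-hand side gives $r_\lambda^{M(q,t,\mathbf{c},z)} C_\mu F_\lambda$, and (\ref{C_mu_F_lambda}) converts this to $r_\lambda^{M(q,t,\mathbf{c},z)} \,h_\lambda\, \chi_\lambda(\mu)\, z_\mu^{-1} F_\lambda$. The right-hand side, using (\ref{C_mu_F_lambda}) termwise, becomes $h_\lambda \sum_d z^d \sum_\nu F^d_{M(q,t,\mathbf{c})}(\mu,\nu)\,\chi_\lambda(\nu)\, F_\lambda$. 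Cancelling $h_\lambda F_\lambda$ yields the identity
\be
\frac{r_\lambda^{M(q,t,\mathbf{c},z)}\,\chi_\lambda(\mu)}{z_\mu} = \sum_{d=0}^\infty z^d \sum_{\nu,\,|\nu|=|\mu|} F^d_{M(q,t,\mathbf{c})}(\mu,\nu)\,\chi_\lambda(\nu).
\ee

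Third, I would invert this by multiplying both sides by $\chi_\lambda(\nu')/z_{\nu'}$ and summing over $\lambda$, using the column orthogonality relation $\sum_\lambda \chi_\lambda(\nu)\chi_\lambda(\nu') = z_\nu\,\delta_{\nu\nu'}$. This collapses the sum on the right to $\sum_d z^d F^d_{M(q,t,\mathbf{c})}(\mu,\nu')$ and produces exactly the inner sum appearing in the power-sum expansion of $\tau$ above, completing the proof.

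The argument is essentially mechanical once the two key ingredients are in place, so there is no serious obstacle; the only delicate point is bookkeeping the factors of $z_\mu$, $h_\lambda$ and $z_\nu$ correctly when passing between the $\{C_\mu\}$ and $\{F_\lambda\}$ bases, and making sure the orthogonality relation is applied in the right direction. The nontrivial conceptual content — the eigenvalue formula (\ref{r_lambda_q_p}) and the interpretation (\ref{Fd_Mqt_def}) of $F^d_{M(q,t,\mathbf{c})}(\mu,\nu)$ via signed paths — has already been established upstream, and the theorem simply records that the $\tau$-function repackages these into a single generating series.
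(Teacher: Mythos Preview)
Your proposal is correct and follows essentially the same route as the paper's own proof: derive the key identity $r_\lambda^{M(q,t,\mathbf{c},z)}\chi_\lambda(\mu)/z_\mu=\sum_{d,\nu} z^d F^d_{M(q,t,\mathbf{c})}(\mu,\nu)\chi_\lambda(\nu)$ from (\ref{Pi_Cmu_F_Gd}) and (\ref{Pi_hat_r_lambda}), then substitute the Frobenius character formula into the double Schur expansion and invoke character orthogonality. Your write-up is in fact more explicit than the paper's, which compresses your steps~2 and~3 into two sentences; the only cosmetic difference is the order in which you expand the $\tau$-function versus establish the identity.
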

\begin{proof}
Combining (\ref{Pi_Cmu_F_Gd}) with (\ref{F_lambda_C_mu}), and using  eq.~(\ref{Pi_hat_r_lambda}), 
gives
\be
\sum_{d=0}^\infty z^d \sum_{\nu,  |\nu|=|\mu|} F^d_{M(q,t,{\bf c})}(\mu, \nu) \chi_\lambda(\nu) = { \chi_\lambda(\mu) \over z_\mu} r_\lambda^{M(q,t, {\bf c}, z)}(N) 
\label{r_lambda_chi_lambda}
\ee
Substituting the Frobenius character formula
\be
s_\lambda({\bf t})= \sum_{\mu, \, |\mu| =|\lambda|}z_\mu^{-1} \chi_\lambda(\mu) p_\mu({\bf t} ) ,
\quad  s_\lambda({\bf s} )= \sum_{\nu, \, |\mu| =|\lambda|}z_\nu^{-1} \chi_\lambda(\nu) p_\nu({\bf s} ) ,
\ee
and (\ref{r_lambda_chi_lambda}) into (\ref{tau_G_double_schur}), for $N=0$, and using the 
orthogonally of characters,  we obtain (\ref{tau_Mqt_F}). 
\end{proof}

  
\subsubsection{Enumerative geometrical   quantum weighted Hurwitz numbers}

We recall the two types of weighting factors  appearing in the definition of quantum Hurwitz numbers
in ref.~\cite{GH2}.
\bea
W_{E(q)} (\mu^{(1)}, \dots, \mu^{(k)}) &\& \deq
{1\over \abs{\aut(\lambda)}}\sum_{\sigma\in S_k} \sum_{0 \le i_1 < \cdots < i_k}^\infty q^{i_1 \ell^*(\mu^{(\sigma(1))})} \cdots q^{i_k \ell^*(\mu^{(\sigma(k))})} \cr
&\&={1\over  \abs{\aut(\lambda)}}\sum_{\sigma\in S_k} \frac{q^{(k-1) \ell^*(\mu^{(\sigma(1))})} \cdots q^{\ell^*(\mu^{(\sigma(k-1))})}}{
(1- q^{\ell^*(\mu^{(\sigma(1))})}) \cdots (1- q^{\ell^*(\mu^{(\sigma(1))})} \cdots q^{\ell^*(\mu^{(\sigma(k))})})}, \cr
&\&
\label{W_E_q}
\eea
\bea
W_{H(q)} (\nu^{(1)}, \dots, \nu^{(\tilde{k})}) &\& \deq
{(-1)^{\ell^*(\tilde{\lambda})}\over \abs{\aut(\tilde{\lambda})}} \sum_{\sigma\in S_{\tilde{k}}} \sum_{0 \le i_1 \le \cdots \le i_{\tilde{k}}}^\infty q^{i_1 \ell^*(\nu^{(\sigma(1))})} \cdots q^{i_k \ell^*(\nu^{(\sigma({\tilde{k}}))})} \cr
&\&= {(-1)^{\ell^*(\tilde{\lambda})}\over   \abs{\aut(\tilde{\lambda})}}\sum_{\sigma\in S_{\tilde{k}}} \frac{1}{
(1- q^{\ell^*(\nu^{(\sigma(1))})}) \cdots (1- q^{\ell^*(\nu^{(\sigma(1))})} \cdots q^{\ell^*(\nu^{(\sigma({\tilde{k}}))})})}, \cr
&\&
\label{W_H_q}
\eea
where $\lambda$ is the partition with parts $(\ell^*(\mu^{(1)}), \dots, \ell^*(\mu^{(k)}))$, $\tilde{\lambda}$ 
 the one with parts $(\ell^*(\nu^{(1)}), \dots, \ell^*(\nu^{({\tilde{k}})}))$, and $\abs{\aut(\lambda)}$ is the order of the automorphism 
 group of $\lambda$ :
\be
\abs{\aut(\lambda)}= \prod_{i=1}^{\ell(\lambda)} m_i(\lambda)!.
\ee 
Denote the product of these
\be
W_q (\mu^{(1)}, \dots, \mu^{(k)}; \nu^{(1)}, \dots, \nu^{({\tilde{k}})})
 := W_{E(q)} (\mu^{(1)}, \dots, \mu^{(k)})W_{H(q)} (\nu^{(1)}, \dots, \nu^{({\tilde{k}})}) .
\ee

Recall also the definition of the Pochhammer symbol $(u)_\lambda $ associated with 
a partition $\lambda$
\be
(u)_\lambda := \prod_{i=1}^{\ell(\mu)}\prod_{j=1}^{\lambda_i}(u+j-i) 
\ee
and the following Lemma (cf.~\cite{OrS}), which follows from the Frobenius character formula.
\begin{lemma}
\label{Poch_Frob}
The Pochhammer symbol may be expressed as
\be
(u)_\lambda =s_\lambda({\bf t}(u)) h_\lambda
= \left(1+ h_\lambda\sideset{}{'}
\sum_{\mu, \, \abs{\mu}=\abs{\lambda}}\frac{\chi_\lambda(\mu)}{z_\mu} u^{-\ell^*(\mu)}  \right)
\label{pochhammer_frobenius}
\ee
where
\be
{\bf t}(u) :=  (u, \frac{u}{2}, \frac{u}{3},  \dots ), 
\label{t_u}
\ee
and $\sum'_{\mu, |\mu |=|\lambda|)}$ denotes the sum over all partitions other than the cycle type of the identity element $(1)^{\abs{\lambda}}$.
\end{lemma}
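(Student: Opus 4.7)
The plan is to reduce both equalities to the Frobenius character formula combined with the classical hook-content specialization of Schur polynomials.

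First I would observe that $p_i(\mathbf{t}(u)) = i\, t_i(u) = u$ for every $i \geq 1$, so $p_\mu(\mathbf{t}(u)) = u^{\ell(\mu)}$ for any partition $\mu$. Substituting into Frobenius yields
$$s_\lambda(\mathbf{t}(u)) = \sum_{\mu,\, |\mu|=n} \frac{\chi_\lambda(\mu)}{z_\mu}\, u^{\ell(\mu)}, \qquad n := |\lambda|,$$
which is a polynomial in $u$ of degree $n$.

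Next I would establish $h_\lambda s_\lambda(\mathbf{t}(u)) = (u)_\lambda$ by polynomial interpolation. At $u = N$ for a positive integer $N$, the substitution $p_i \mapsto N$ coincides with evaluating the Schur polynomial in $N$ variables at $x_1 = \cdots = x_N = 1$, and Macdonald's hook-content formula then gives
$$s_\lambda(\mathbf{t}(N)) = s_\lambda(1^N) = \prod_{(i,j)\in\lambda} \frac{N+j-i}{h(i,j)}.$$
Multiplying by $h_\lambda = \prod_{(i,j)\in\lambda} h(i,j)$ produces $h_\lambda s_\lambda(\mathbf{t}(N)) = \prod_{(i,j)\in\lambda}(N+j-i) = (N)_\lambda$. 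Since both sides are polynomials in $u$ of the same degree, the identity extends to all $u \in \mathbb{C}$.

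For the second equality, I would single out the unique cycle type with $\ell(\mu) = n$, namely $\mu = (1^n)$, using $z_{(1^n)} = n!$ and $\chi_\lambda((1^n)) = \dim V_\lambda = n!/h_\lambda$; its contribution to $h_\lambda s_\lambda(\mathbf{t}(u))$ is exactly $u^n$. Pulling this out and rewriting $u^{\ell(\mu)} = u^n \cdot u^{-\ell^*(\mu)}$ for the remaining $\mu$ (with $\ell^*(\mu) = n - \ell(\mu)$) produces the displayed expansion, with the overall $u^n$ factor implicit on the right-hand side.

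The main technical input is the content-product specialization $h_\lambda s_\lambda(\mathbf{t}(u)) = (u)_\lambda$; once the hook-content formula is invoked, the remainder is routine bookkeeping on the Frobenius sum. An alternative route would read this identity off directly from the content-product eigenvalue formula (\ref{content_product_eigenvalue}), via the characteristic map sending $F_\lambda$ to $s_\lambda$ applied to a central element whose generating function is $\prod_a (u + \mathcal{J}_a)$, but the hook-content specialization is the most direct path.
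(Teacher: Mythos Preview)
Your argument is correct. The paper does not actually supply a proof of this lemma: it merely states that the result ``follows from the Frobenius character formula'' and cites \cite{OrS}. Your write-up fills in exactly the standard details one expects behind that remark --- the hook-content principal specialization $s_\lambda(1^N)=\prod_{(i,j)\in\lambda}(N+j-i)/h(i,j)$ to obtain the first equality, followed by the Frobenius expansion and the separation of the $\mu=(1)^n$ term for the second. The polynomial interpolation step is fine since both sides have degree $n$ in $u$ and agree at all positive integers.

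You are also right to flag the missing overall factor of $u^{|\lambda|}$ on the right-hand side of (\ref{pochhammer_frobenius}); as written the equality cannot hold, since $(u)_\lambda$ has degree $|\lambda|$ while the displayed expression is $1+O(u^{-1})$. That this is the intended reading is confirmed by how the lemma is used in the proof of Theorem~\ref{gj_cycle_sums}, where the factors $u^{|\lambda|}$ with $u=-1/(tzq^k)$ and $u=-1/(zq^k)$ cancel in the ratio. Your alternative route via the content-product eigenvalue formula for $\prod_a(u+\JJ_a)$ is in fact closer in spirit to the machinery the paper is built on, but the direct hook-content argument you give is cleaner for this isolated lemma.
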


It is useful to know how any given symmetric combination of the Jucys-Murphy elements may be 
represented  in the basis of cycle-type sums (see e.g. \cite{Las}).  The following result shows how to do this 
for the symmetric functions $g_j(\JJ, q, t)$.
\begin{theorem}
\label{gj_cycle_sums}
\be
g_j(\JJ, q, t)
 =\sum_{e=0}^j t^e  \sum_{k,  \, \tilde{k}=0}^{e, \,  j-e} {\hskip-30 pt}
\sideset{}{'}\sum_{\substack{\{\mu^{(u)}, \nu^{(v)} \}\\
1\le u\le k, 1\le v \le {\tilde{k}} \\
 |\mu^{(u)}|=| \nu^{(v})|=n \\
  \sum_{u=1}^k \ell^*(\mu^{(u})=e  \\ 
  \sum_{u=1}^k \ell^*(\mu^{(u)})+ \sum_{v=1}^{\tilde{k}} \ell^*(\nu^{(v)} )=j 
     }}
{\hskip-40 pt} W_q(\mu^{(1)}, \dots, \mu^{(k)}; \nu^{(1)}, \dots, \nu^{({\tilde{k}})})
 \prod_{u=1}^k C_{\mu^{(u)}} \prod_{v=1}^{\tilde{k}} C_{\nu^{(v)}}
  \label{gj_cycle_expansion}
\ee
where, by (\ref{combin_hurwitz}),
\be
\prod_{u=1}^k C_{\mu^{(u)}} \prod_{v=1}^{\tilde{k}} C_{\nu^{(v)}} =\sum_{\nu, \, |\nu|=n}
H(\mu^{(1)}, \dots , \mu^{(k)}, \nu^{(1)}, \dots , \nu^{({\tilde{k}})}, \nu) \, z_\nu C_\nu
\ee
\end{theorem}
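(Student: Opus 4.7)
The plan is to exploit the factorization
$$M(q,t,z) = \frac{(tz;q)_\infty}{(z;q)_\infty} = (tz;q)_\infty \cdot \frac{1}{(z;q)_\infty}$$
of the quantum weight generating function into a ``quantum elementary'' factor and a ``quantum complete homogeneous'' factor, each of which was shown in \cite{GH2} to admit a cycle-sum expansion governed by the weights $W_{E(q)}$ and $W_{H(q)}$ respectively. Extending multiplicatively to the ${\bf c}$ indeterminates and writing
$$\prod_i (tzc_i;q)_\infty =: \sum_{a\ge 0} t^a z^a g^E_a({\bf c},q), \qquad \prod_i \frac{1}{(zc_i;q)_\infty} =: \sum_{b\ge 0} z^b g^H_b({\bf c},q),$$
a direct comparison of coefficients of $z^j$ against (\ref{M_qtcz}) yields
$$g_j({\bf c},q,t) = \sum_{e=0}^j t^e \, g^E_e({\bf c},q)\, g^H_{j-e}({\bf c},q).$$
This is an identity of symmetric functions in ${\bf c}$; since the $\JJ_i$ commute and each $g^E_a$, $g^H_b$ is a polynomial in power sums, it is preserved under the substitution ${\bf c}\to\JJ$.

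Next I would invoke the cycle-sum expansions for the $E$ and $H$ quantum Hurwitz generating functions established in \cite{GH2}:
$$g^E_e(\JJ,q) = \sum_{k\ge 0}\;\sideset{}{'}\sum_{\substack{(\mu^{(1)},\dots,\mu^{(k)})\\ |\mu^{(u)}|=n,\;\sum_u \ell^*(\mu^{(u)})=e}} W_{E(q)}(\mu^{(1)},\dots,\mu^{(k)}) \prod_{u=1}^{k} C_{\mu^{(u)}},$$
together with the parallel formula for $g^H_{j-e}(\JJ,q)$ indexed by tuples $(\nu^{(1)},\dots,\nu^{(\tilde k)})$ with total colength $j-e$ and weights $W_{H(q)}$. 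The primed sums reflect the omission of the trivial partition $(1^n)$, for which $\ell^*=0$; the corresponding terms are absorbed by permitting $k=0$ or $\tilde k=0$. Since cycle sums all commute in $\Zb(\Cb[S_n])$, multiplying the two expansions and collecting by the fixed bipartition of the total colength $j=e+(j-e)$ produces the combined sum in (\ref{gj_cycle_expansion}), with weight $W_q = W_{E(q)}\cdot W_{H(q)}$ as defined just above the theorem. The ranges $0\le k\le e$ and $0\le\tilde k\le j-e$ are then automatic from $\ell^*(\mu^{(u)}),\ell^*(\nu^{(v)})\ge 1$ (enforced by the primed sums), and the rewriting of $\prod_u C_{\mu^{(u)}}\prod_v C_{\nu^{(v)}}$ in the $\{C_\nu\}$ basis is precisely (\ref{combin_hurwitz}).

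The principal technical obstacle is sign and normalization bookkeeping when importing the two expansions from \cite{GH2}. In particular one must verify that the factor $(-1)^{\ell^*(\tilde\lambda)}$ in (\ref{W_H_q}) genuinely arises from the expansion of $1/(z;q)_\infty$ in cycle sums through Jucys-Murphy content products (equivalently, through the alternating inclusion-exclusion produced when expanding the geometric factors in the $\JJ$-substitution), and that the strict-versus-weak inequality structure of the inner $i$-summations in (\ref{W_E_q})--(\ref{W_H_q}) matches the elementary-versus-complete-homogeneous combinatorics of the two factors $(tz;q)_\infty$ and $1/(z;q)_\infty$. Once those conventions are checked against \cite{GH2}, the remainder of the argument is a mechanical assembly of the two expansions.
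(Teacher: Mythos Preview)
Your approach is correct and arrives at the same formula, but the organization differs from the paper's. You factor $M(q,t,z)=(tz;q)_\infty\cdot(z;q)_\infty^{-1}$ at the symmetric-function level, deduce $g_j=\sum_e t^e\,g^E_e\,g^H_{j-e}$, substitute $\JJ$, and then import the two separate cycle-sum expansions for the $E$- and $H$-type generating functions from \cite{GH2}. The paper instead works directly in $\Zb(\Cb[S_n])$: it applies the full generating function $\sum_j g_j(\JJ,q,t)z^j$ to the idempotents $F_\lambda$, uses the content-product formula to obtain a ratio of Pochhammer symbols $(-1/(tzq^k))_\lambda\big/(-1/(zq^k))_\lambda$, and then invokes Lemma~\ref{Poch_Frob} to rewrite each Pochhammer symbol as $1+\sum'_\mu (h_\lambda\chi_\lambda(\mu)/z_\mu)\,u^{-\ell^*(\mu)}$. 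Since $C_\mu F_\lambda=(h_\lambda\chi_\lambda(\mu)/z_\mu)F_\lambda$ and the $F_\lambda$ form a basis, this yields the operator identity
\[
\sum_{j\ge 0} g_j(\JJ,q,t)\,z^j \;=\; \prod_{k\ge 0}\frac{1+\sum'_\mu C_\mu(-tzq^k)^{\ell^*(\mu)}}{1+\sum'_\nu C_\nu(-zq^k)^{\ell^*(\nu)}},
\]
whose expansion in $z$ and $t$ (with the $q$-series summed as in \cite{GH2}) gives the theorem. Your route is more modular, reducing the result to two building blocks already on record; the paper's is more self-contained, rederiving the combined cycle-sum identity via the Pochhammer/character mechanism rather than citing the $E$ and $H$ cases individually. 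The sign issue you flag is genuine: the numerator $(tz;q)_\infty$ is related to the $E(q,{\bf c},z)$ of \autoref{E_c_q} by $z\mapsto -tz$, and tracking that substitution through the \cite{GH2} expansion is exactly what produces the $t^e$ grading and the sign $(-1)^{\ell^*(\tilde\lambda)}$ in $W_{H(q)}$.
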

\br
Note that the sums appearing in (\ref{gj_cycle_expansion}) are all finite because of the fact that
the partitions corresponding to the identity class of $S_n$  are excluded
and the constraints
\be
 \sum_{u=1}^k \ell^*(\mu^{(u})=e, \quad  
  \sum_{u=1}^k \ell^*(\mu^{(u)})+ \sum_{v=1}^{\tilde{k}} \ell^*(\nu^{(v)} )=j 
\ee
imply the number of partitions $k + \tilde{k}$ is finite.
\er
\begin{proof}
We start with the expansion
\be
\prod_{a=1}^n \prod_{k=0}^\infty {1- t z \JJ_a q^k \over 1 - z \JJ_aq^k}
=\sum_{j=0}^\infty g_j(\JJ, q, t) z^j
\label{Pi_z_JJ}
\ee
Applying the LHS  of (\ref{Pi_z_JJ}) to  $F_\lambda$ and using (\ref{content_product_eigenvalue}) gives
\bea
\prod_{a=1}^n \prod_{k=0}^\infty {1- t z \JJ_a q^k \over 1 - z \JJ_a q^k} F_\lambda
&\& = \prod_{(i,j)\in \lambda} \prod_{k=0}^\infty {1 - tz (j-i) q^k \over 1 - z (j-i) q^k} F_\lambda  = \prod_{k=0}^\infty {(-{1\over tz q^k})_\lambda \over (-{1\over z q^k})_\lambda}\, F_\lambda \\
&\& {} \cr
&\&=
 \prod_{k=0}^\infty {1+ h_\lambda\sideset{}{'} \sum_{\mu, \, \abs{\mu} =\abs{\lambda}}
 \frac{\chi_\lambda(\mu)}{z_\mu} (-tzq^k)^{\ell^*(\mu)} 
 \over 1+ h_\lambda\sideset{}{'} \sum_{\nu, \, \abs{\nu}=\abs{\lambda}}\frac{\chi_\lambda(\nu)}{z_\nu} 
 (-z q^k)^{\ell^*(\nu)}  }\, F_\lambda,
 \label{Pi_z_eigenvalue}
\eea
where Lemma \ref{Poch_Frob} has been used in both the numerator and denominator of (\ref{Pi_z_eigenvalue}).
From the relation  (\ref{C_mu_F_lambda}) and the fact that $\{F_\lambda\}$ is a basis for the center 
$\Zb(\Cb[S_n])$, eq.~(\ref{Pi_z_eigenvalue}),  together with (\ref{Pi_z_JJ}), is equivalent to the identity
\be
\sum_{j=0}^\infty g_j(\JJ, q, t) z^j =  \prod_{k=0}^\infty {1 + \sum C_\mu (-tzq^k)^{\ell^*(\mu)} \over
1 + \sum C_\nu (-zq^k)^{\ell^*(\nu)}}.
\label{ratio_cycle_sum_id}
\ee
Expanding (\ref{ratio_cycle_sum_id}) as a power series in $z$ and $t$,  and summing the resulting 
geometric series expansions in $q$, as detailed in \cite{GH2}, to obtain (\ref{W_E_q}) and (\ref{W_H_q})
gives the result (\ref{gj_cycle_expansion}).
\end{proof}

   Now let $\{\{\mu^{i, u_i}\}_{u_i =1, \dots , k_i}, \{\nu^{i, v_i}\}_{v_ i= 1, \dots, \tilde{k}_i}, \mu, \nu \}_{i=1, \dots, l}$ 
denote the branching profiles of an $n$-sheeted covering of the Riemann sphere  
with two specified branch points of ramification profile types $(\mu, \nu)$, at $(0, \infty)$,  and the rest divided into  two classes I and II,
denoted  $\{\mu^{(i,u_i)}\}_{u_i =1, \dots , k_i}$ and  $\{\nu^{(i, v_ i)}\}_{v_ i= 1, \dots, \tilde{k}_i}$, respectively. These are further subdivided into $l $ species, or ``colours'', labelled by  $i=1, \dots  l$,  the elements within each colour group distinguished  by the  labels $(u_i =1, \dots , k_i)$  
  and $(v_ i =1, \dots , \tilde{k}_ i)$.  To such a grouping, we assign a partition $\lambda$ of length
\be
\ell(\lambda) = l
\ee
and weight
\be
d: = |\lambda| = \sum_{i =1}^l \left (\sum_{u_i =1}^{k_i}\ell^*(\mu^{(i, u_i)})
+ \sum_{v_i =1}^{\tilde{k}_i}\ell^*(\nu^{(i, v_i)}) \right) = \sum_{i=1}^l d_i, 
\ee
whose parts $(\lambda_1\ge \cdots \ge \lambda_l > 0)$ are equal  the total colengths
\be
d_i := \sum_{u_i =1}^{k_i} \ell^*(\mu^{(i, u_i)}) + \sum_{v_i=1}^{\tilde{k}_i}\ell^*(\nu^{(i, v_i)}), \quad i=1. \dots, l
\ee
in weakly decreasing order.
By the Riemann-Hurwitz formula, the genus $g$ of the covering curve is given by
\be
  2-2g = \ell(\mu) +\ell(\nu)  - d.
  \ee 
  
We now assign a weight $W_q (\{\mu^{(i,u_i)}, \nu^{( i, v_ i)}\}, {\bf c} ) $ 
to each such covering, consisting of the product of all the weights 
$W_{E(q)}(\{\mu^{(i, u_i)}\}_{u_i = 1, \dots, k_i})$, 
$W_{H(q)}(\{\nu^{( i, v_ i)}\}_{v_ i = 1, \dots, \, \tilde{k}_ i})$
for the subsets of different colour and class and  the weight $m_\lambda({\bf c})$
given by the monomial symmetric functions evaluated at the parameters ${\bf c}$
\bea
  W_q (\{\mu^{(i,u_i)}, \nu^{( i, v_ i)}\}, {\bf c} ) 
  &\&:= W_q (\{\mu^{(i,u_i)}, \nu^{( i, v_ i)}\} ) m_\lambda({\bf c})\\
  W_q (\{\mu^{(i,u_i)}, \nu^{( i, v_ i)}\} ) &\&:= 
  \prod_{i=1}^lW_{E(q)}(\{\mu^{(i,u_i)}\}_{u_i = 1, \dots, \, k_i})
   W_{H(q)}(\{\nu^{( i, v_ i)}\}_{ i = 1, \dots, \,  \tilde{k}_ i})
   \eea

Using these weights, for every pair $(d,e)$ of non-negative integers and  $(\mu, \nu)$
of partitions of $n$,  we define the  geometrical  quantum weighted Hurwitz numbers 
$H^{(d,e)}_{({\bf c}, q)}(\mu, \nu) $ as the sum
\be
H^{(d,e)}_{({\bf c}, q)}(\mu, \nu) := z_\nu \sum_{l=0}^{d} {\hskip -35 pt}
\sideset{}{'} \sum_{\substack{\{\mu^{(i, u_i)}, \nu^{(i, v_i)}\} , \ k_i\ge 1, \ \tilde{k}_i \ge 1\\
\sum_{i=1}^l \sum_{u_i =1}^{k_i}\ell^*(\mu^{(i,u_i)}) = e, \\  
\sum_{i=1}^l\left( \sum_{u_i =1}^{k_i}\ell^*(\mu^{(i, u_i)} )
+  \sum_{v_ i =1}^{\tilde{k}_ i}\ell^*(\nu^{( i, v_ i)})\right) =d}} {\hskip - 20 pt}
{\hskip-50 pt}W_q (\{\mu^{(i, u_i)}, \nu^{( i, v_ i})\}, {\bf c}) \ 
H(\{\mu^{(i, u_i)}\}_{\substack{u_i = 1, \dots, k_i \\ i =1, \dots , l}} ,
\{\nu^{( i, v_ i)}\}_{\substack{v_ i = 1, \dots, \tilde{k}_ i \\  i =1, \dots , l}},  \mu, \nu).
\label{Hde_c_q_mu_nu}
\ee

\begin{theorem}
\label{geometric_hurwitz}
The combinatorial Hurwitz numbers $F^d_{M(q,t,{\bf c})} (\mu, \nu) $  are polynomials in $t$  of degree $d$  whose 
coefficients are equal to  the  geometrical  quantum weighted Hurwitz numbers $H^{(d,e)}_{({\bf c}, q)}(\mu, \nu)$
\be
F^d_{M(q,t,{\bf c})} (\mu, \nu) = \sum_{e=0}^d H^{(d,e)}_{({\bf c}, q)}(\mu, \nu) t^e.
\label{Fd_G_Hde_G} 
\ee

Hence $\tau^{M(q,t,{\bf c}, z)} ({\bf t}, {\bf s})$,  when expanded in
the basis of products of power sum symmetric functions and power series in $z$ and $t$
is the generating function for the  $H^{(d,e)}_{({\bf c}, q)}(\mu, \nu)$'s:
\be
\tau^{M(q,t,{\bf c}, z)} ({\bf t}, {\bf s}) = \sum_{d=0}^\infty \sum_{e=0}^d z^d t^e H^{(d,e)}_{({\bf c}, q)}(\mu, \nu) p_\mu({\bf t}) p_\nu({\bf s}).
\label{tau_G_H}
\ee
\end{theorem}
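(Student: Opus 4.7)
The plan is to expand the geometric form \eqref{Pihat_geometrical} of $M_n(q,t,\mathbf{c},z\JJ)$, substitute the cycle-sum expansion from Theorem \ref{gj_cycle_sums} into each factor of $g_\lambda=\prod_i g_{\lambda_i}$, and then recognise the resulting coefficient of $z^d t^e C_\nu$ in $M_n\, C_\mu$ as the defining sum \eqref{Hde_c_q_mu_nu} of $H^{(d,e)}_{(\mathbf{c},q)}(\mu,\nu)$. Matching this against the combinatorial identity \eqref{Pi_Cmu_F_Gd} then yields \eqref{Fd_G_Hde_G}, and \eqref{tau_G_H} follows at once by substituting \eqref{Fd_G_Hde_G} into \eqref{tau_Mqt_F}.

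Concretely, I would first apply \eqref{Pihat_geometrical} to $C_\mu$ to write
\[
M_n(q,t,\mathbf{c},z\JJ)\, C_\mu = \sum_\lambda z^{|\lambda|}\, m_\lambda(\mathbf{c}) \prod_{i=1}^{\ell(\lambda)} g_{\lambda_i}(\JJ,q,t)\, C_\mu,
\]
and then substitute \eqref{gj_cycle_expansion} into each factor $g_{\lambda_i}$. Interpreting the product index $i$ as the species or ``colour'' label, the $i$th factor contributes a sum over non-identity partition collections $\{\mu^{(i,u_i)}\}_{u_i=1}^{k_i}$ and $\{\nu^{(i,v_i)}\}_{v_i=1}^{\tilde k_i}$ subject to the colength constraints $\sum_{u_i}\ell^*(\mu^{(i,u_i)})=e_i$ and $\sum_{u_i}\ell^*(\mu^{(i,u_i)})+\sum_{v_i}\ell^*(\nu^{(i,v_i)})=\lambda_i$, weighted by $t^{e_i}W_{E(q)}(\{\mu^{(i,u_i)}\})W_{H(q)}(\{\nu^{(i,v_i)}\})$ together with a product of cycle-type sums. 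Taking the full product over $i$, multiplying by $C_\mu$, and applying \eqref{combin_hurwitz} to the composite of cycle-sum products collapses the result to $\sum_\nu H(\{\mu^{(i,u_i)},\nu^{(i,v_i)}\},\mu,\nu)\, z_\nu\, C_\nu$.

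The final step is a reorganisation: the double sum over $\lambda$ and coloured profile assignments is regrouped by the total weight $d=|\lambda|$ and the total $W_{E(q)}$-colength $e=\sum_i e_i$. The partition $\lambda$ is reconstructed from the coloured data as the sorted tuple of per-species total colengths $d_i=\sum_{u_i}\ell^*(\mu^{(i,u_i)})+\sum_{v_i}\ell^*(\nu^{(i,v_i)})$, while the prefactor $m_\lambda(\mathbf{c})$ combines with $\prod_i W_{E(q)}W_{H(q)}$ to produce the full geometric weight $W_q(\{\mu^{(i,u_i)},\nu^{(i,v_i)}\},\mathbf{c})$ that enters \eqref{Hde_c_q_mu_nu}. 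Reading off the coefficient of $z^d t^e C_\nu$ in the resulting expression recovers exactly the defining sum for $H^{(d,e)}_{(\mathbf{c},q)}(\mu,\nu)$, and matching against \eqref{Pi_Cmu_F_Gd} yields \eqref{Fd_G_Hde_G}. Substituting this into \eqref{tau_Mqt_F} gives \eqref{tau_G_H}.

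The main obstacle is the bookkeeping of symmetry factors in the reassembly step. The weights $W_{E(q)}$ and $W_{H(q)}$ of Theorem \ref{gj_cycle_sums} are normalised by the automorphism orders of their associated colength partitions, while the species labels inherited from the ordered product $\prod_i g_{\lambda_i}$ are pinned to the weakly decreasing parts of $\lambda$. The nontrivial check is that, after grouping species of equal total colength $d_i$ and absorbing the resulting overcounting into $|\aut(\lambda)|$, the prefactor $m_\lambda(\mathbf{c})$ reproduces exactly the multi-species colour weighting built into $W_q(\ldots,\mathbf{c})$, so that no spurious combinatorial factor survives in the identification with \eqref{Hde_c_q_mu_nu}.
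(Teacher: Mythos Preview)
Your proposal is correct and follows essentially the same route as the paper: substitute the cycle-sum expansion \eqref{gj_cycle_expansion} into each factor of $g_\lambda(\JJ,q,t)=\prod_i g_{\lambda_i}(\JJ,q,t)$, combine with the geometric expansion \eqref{Pihat_geometrical} to express $M_n(q,t,\mathbf{c},z\JJ)\,C_\mu$ in the form \eqref{Pi_Cmu_H_Gd}, and then compare with \eqref{Pi_Cmu_F_Gd} to obtain \eqref{Fd_G_Hde_G} and hence \eqref{tau_G_H}. Your identification of the symmetry-factor bookkeeping as the delicate point is apt; the paper's proof is equally terse there and simply asserts that the reassembly yields the weight $W_q(\{\mu^{(i,u_i)},\nu^{(i,v_i)}\},\mathbf{c})$ defining $H^{(d,e)}_{(\mathbf{c},q)}(\mu,\nu)$.
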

\begin{proof}
Substitution of (\ref{gj_cycle_expansion}) into
\be
g_\lambda (\JJ, q, t) = \prod_{i=1}^{\ell(\lambda)}g_{\lambda_i}(\JJ, q, t)
\ee
gives
\be
g_\lambda (\JJ, q, t)  =\sum_{e=0}^{|\lambda|}  t^e
 {\hskip -30 pt}
\sideset{}{'}\sum_{\substack{\{\mu^{(i, u_i)}, \nu^{( i, v_ i)}\} 
\\ \sum_{i=1}^{\ell(\lambda)} \sum_{u_i =1}^{k_i}\ell^*(\mu^{(i, u_i)}) = e, \\ 
 \sum_{u_i =1}^{k_i}\ell^*(\mu^{(i, u_i)}) +  \sum_{v_i =1}^{\tilde{k}_i}\ell^*(\nu^{(i, v_i)}) = \lambda_{i}}} {\hskip - 40 pt}
W_q (\{\mu^{(i, u_i)}, \nu^{( i, v_ i)}\} )  \prod_{i=1}^{\ell(\lambda)} \left(\prod_{u_i=1}^{k_i}
C_{\mu^{(i, u_i)}}\prod_{v_ i=1}^{\tilde{k}_ i }C_{\nu^{( i, v_ i)}} \right)
\ee
Combining this with (\ref{Pihat_geometrical}) gives
\be
M_n(q,t, {\bf c}, z\JJ)  C_\mu = \sum_{d=0}^\infty \sum_{e=0}^d z^d t^e \sum_{\substack{\nu\\ |\nu|=|\mu| = n}} H^{(d,e)}_{({\bf c}, q)} (\mu, \nu)  C_\nu, 
\label{Pi_Cmu_H_Gd}
\ee
where $H^{(d,e)}_{({\bf c}, q)} (\mu, \nu)$  is defined by (\ref{Hde_c_q_mu_nu}).
Comparing with (\ref{Pi_Cmu_F_Gd}) gives the result (\ref{Fd_G_Hde_G}), and hence (\ref{tau_G_H}).
\end{proof}

\section{Specializations, limits and examples}
 \label{examples}

By making specific  choices for the parameters $\{(c_1, c_2,  \dots ), q, t\}$ defining the weight
generating function $M(q,t,  {\bf c}, z)$, specialized versions of the above
quantum weighted Hurwitz numbers result.  Taking the limits $(z,t) \ra (0, \infty)$,  with  $t z$ fixed gives
the quantum deformation of the path weighting by elementary symmetric functions
considered in \cite{GH2}. The limit $t\ra 0$ gives the dual case, weighted by the quantum deformation of the
 path weighting by complete  symmetric functions. Other specializations involving only particular
values for the pair $(q,t)$ or their limits reduce the Macdonald polynomials either to Schur polynomials ($q=t)$,
or Hall-Littlewood polynomials ($q=0$) or Jack polynomials ($q=t^\alpha, \ t\ra 1$). 
In this way  we can recover  all previously studied versions of weighted 
Hurwitz numbers, as well as  several new examples of interest.


\subsection{Classically weighted Hurwitz numbers $(q=t)$}

By setting $t=q$ in (\ref{Pihat}) we recover the case of Schur functions  and the general
classically weighted families of Hurwitz numbers studied in \cite{GH2}.


\subsection{The case $c_i= -\delta_{i,1}$ (quantum monotonic paths) }

 This gives the quantum deformation of the  classical case (corresponding to $q=0$) when the weight generating function
  is ${1+w \over 1-z}$, with $w = -tz$.  If $w=0$, the latter  becomes the signed counting problem
  for branched covers with fixed genus or, equivalently,  weakly monotonic paths in
  the Cayley graph generated by transpositions \cite{GH1, GH2}. When $z=0$ it  gives  the  Hurwitz numbers
  for Belyi curves (having three branch points, with two of them fixed)  of fixed genus or, equivalently, 
 strongly  monotonic paths in  the Cayley graph generated by transpositions \cite{GH1, GH2}.
  When $q\neq 0,\,  t=1$, this is the particular case of the multispecies quantum Hurwitz numbers 
  $F^d_{Q(q,q)}(\mu, \nu)= H^d_{Q(q,q)}(\mu, \nu)$ developed in detail in \cite{GH2},  when there 
  are only two species involved, one of the first class, the other, of second.

\subsection{Elementary quantum weighting  ($(z,t) \ra (0, \infty)$,   $-t z $ fixed  ($\ra z$) }
\label{E_c_q}

For this case, the weight generating function is 
\be
E(q, {\bf c}, z) := \prod_{k=0}^\infty \prod_{i=1}^\infty(1 +zq^k c_i) = \prod_{i=1}^\infty (-zc_i; q)_{\infty}
=: \sum_{j=0}^\infty e_j(q,{\bf c}) z^j, 
\ee
where $e_j(q, {\bf c})$ is the quantum deformation of the elementary symmetric function $e_j({\bf c})$
(the classical limit being $q \ra 0$). Setting $c_i = \delta_{i1}$ reproduces the generating function
functions for the special quantum weighted Hurwitz numbers  denoted 
$H^d_{E(q)}(\mu, \nu) = F^d_{E(q)}(\mu, \nu)$ that were studied in \cite{GH2}.

In the general case, the corresponding element of the center of the group algebra is:
\be
E_n(q, {\bf c}, z\JJ)  := \prod_{a=1}^n E(q, {\bf c}, z\JJ_a) 
 =\sum_{\lambda} z^{|\lambda|} e_\lambda(q, {\bf c})  m_\lambda(\JJ) 
 = \sum_{\lambda} z^{|\lambda|} m_\lambda(\JJ)  e_\lambda(q, {\bf c}) 
\ee
where 
\be
e_\lambda(q, {\bf c}) := \prod_{i=1}^{\ell(\lambda)} e_{\lambda_i} (q,{\bf c}).
\ee

Applying $E_n(q, {\bf c}, z\JJ)  \in\Zb(\Cb[S_n])$ to the orthogonal idempotents
$\{F_\lambda\}$ and  the cycle-type sums $\{C_\mu\}$, it follows that
the corresponding hypergeometric $2D$ Toda $\tau$-function is 
\bea
\tau^{E(q, {\bf c}, z)}({\bf t}, {\bf s}) &\&= \sum_\lambda r_\lambda^{E(q,{\bf c}, z)} s_\lambda({\bf t}) s_\lambda({\bf s}) \\
 &\&= \sum_{d=0}^\infty  z^d\sum_\lambda F^d_{E(q, {\bf c})}(\mu, \nu)  p_\mu({\bf t}) p_\nu({\bf s}),
\eea
where the content product coefficient $r_\lambda^{E(q,{\bf c}, z)}$ is
\be
r_\lambda^{E(q, {\bf c}, z)} := \prod_{(ij) \in \lambda} \prod_{k=0}^\infty   (-z(j-i)c_k; q)_\infty
\ee
and
\be
F^d_{E(q, {\bf c})}(\mu, \nu) := \sum_{|\lambda|=d}e_\lambda(q, {\bf c}) m^\lambda_{\mu \nu}
\label{F_dE_qc}
\ee
is  the weighted number of paths in the Cayley graph of $S_n$ generated by transpositions,
starting at  the conjugacy class $\cyc(\mu)$ and ending at $\cyc(\nu)$, with the
weight  $e_\lambda(q, {\bf c})$  for a path of signature $\lambda$. 
  
 Now consider $n$-fold branched coverings of $\Cb \Pb^1$ with a fixed pair of branch points   at $(0, \infty)$
   with ramification profiles $(\mu, \nu)$ and  a further $  \sum_{i=1}^l k_i $  branch points  $\{\mu^{(i,u_i)}\}_{u_i = 1, \dots, \, k_i}$ of $l$ 
   different species (or ``colours''), labelled by $i=1, \dots , l$, with non trivial ramification profiles.
   The weight  $W_{E^l(q)} (\{\mu^{(i,u_i)}\}_{\substack{u_i = 1, \dots, k_i \\ i=1, \dots , l}}, {\bf c} )$
for such a covering consists of the product of all the weights 
$W_{E(q)}(\{\mu^{(i, u_i)}\}_{u_i = 1, \dots, k_i})$, 
for the subsets of different colour  with the weight $m_\lambda({\bf c})$
given by the monomial symmetric functions evaluated at the parameters ${\bf c}$
\bea
  W_{E^l(q)} (\{\mu^{(i,u_i)}\}_{\substack{u_i = 1, \dots, k_i \\ i=1, \dots , l}},  {\bf c}  )
  &\&:= W_{E^l(q)} (\{\mu^{(i,u_i)}\}_{\substack{u_i = 1, \dots, k_i \\ i=1, \dots , l}}  ) \, m_\lambda({\bf c})\\
  W_{E^l(q)}(\{\mu^{(i,u_i)}\}_{\substack{u_i = 1, \dots, k_i \\ i=1, \dots , l}} ) &\&:= 
  \prod_{i=1}^lW_{E(q)}(\{\mu^{(i,u_i)}\}_{u_i = 1, \dots, \, k_i}).
   \eea
We then have
\be
 F^d_{E(q,{\bf c})}(\mu, \nu) =H^d_{E(q, {\bf c})} (\mu, \nu), 
\ee
where
\be
H^d_{E({\bf c}, q)}(\mu, \nu) := z_\nu \sum_{l=0}^{d} {\hskip -10pt}
\sideset{}{'} \sum_{\substack{\{\mu^{(i, u_i)}\} , \ k_i\ge 1, \ \\
\sum_{i=1}^l \sum_{u_i =1}^{k_i}\ell^*(\mu^{(i,u_i)}) = d 
}} 
{\hskip-20 pt}W_{E^l(q)} (\{\mu^{(i, u_i)}\}_{\substack{u_i = 1, \dots, k_i \\ i=1, \dots , l}}, {\bf c}) \ 
H(\{\mu^{(i, u_i)}\}_{\substack{u_i = 1, \dots, k_i \\ i =1, \dots , l}} ,
 \mu, \nu)
\label{Eq_d_c}
\ee
  is the geometrical elementary quantum weighted Hurwitz number.

\subsection{Complete quantum weighting  ($t=0$)}
\label{H_c_q}

This is the dual of the preceding case, with weight generating function
\be
H(q, {\bf c}, z) := \prod_{k=0}^\infty \prod_{i=1}^\infty(1 - zq^k c_i)^{-1} = \prod_{i=1}^\infty (zc_i; q)^{-1}_{\infty}
 =: \sum_{j=0}^\infty h_j(q,{\bf c}) z^j,
\ee
where $h_j(q, {\bf c})$ is the quantum deformation of the complete symmetric function $h_j({\bf c})$.  
Setting $c_i = \delta_{i1}$ reproduces the generating function
functions for the  quantum weighted Hurwitz numbers  $H^d_{H(q)} (\mu, \nu)= F^d_{H(q)}(\mu, \nu)$ studied in \cite{GH2}.
The corresponding element of the center of the group algebra in the general case is:
\bea
H_n(q, {\bf c}, z\JJ) := \prod_{a=1}^n H(q, {\bf c}, \JJ_a) \
 =\sum_{\lambda}z^{|\lambda|}  h_\lambda(q, {\bf c} )m_\lambda(\JJ)
  = \sum_{\lambda} z^{|\lambda|} m_\lambda({\bf c})  h_\lambda(q, \JJ) ,
\eea
where 
\be
h_\lambda(q, {\bf c}) := \prod_{i=1}^{\ell(\lambda)} h_{\lambda_i} (q, {\bf c}).
\ee

The hypergeometric $2D$ Toda $\tau$-function for this case  is 
\bea
\tau^{H(q,{\bf c}, z)}({\bf t}, {\bf s}) &\&= \sum_\lambda r_\lambda^{H(q, {\bf c}, z)} s_\lambda({\bf t}) s_\lambda({\bf s}) \\
 &\&= \sum_{d=0}^\infty  z^d\sum_\lambda F^d_{H(q,{\bf c})}(\mu, \nu)  p_\mu({\bf t}) p_\nu({\bf s}), 
\eea
where 
\be
r_\lambda^{H(q,{\bf c}, z)} := \prod_{(ij) \in \lambda} \prod_{k=0}^\infty   (z(j-i)c_k; q)^{-1}_\infty
\ee
and
\be
F^d_{H(q,{\bf c})}(\mu, \nu) := \sum_{|\lambda|=d}h_\lambda(q, {\bf c}) m^\lambda_{\mu \nu}
\label{F_dH_qc}
\ee
is  the weighted number of paths in the Cayley graph of $S_n$ generated by transpositions,
starting at  the conjugacy class $\cyc(\mu)$ and ending at $\cyc(\nu)$, with 
weight  $h_\lambda(q, {\bf c})$  for a path of signature $\lambda$.
 
Consider again  $n$-fold branched coverings of $\Cb \Pb^1$,   with a fixed pair of branch points   at $(0, \infty)$
   with ramification profiles $(\mu, \nu)$ and  a further 
   $  \sum_{i=1}^l \tilde{k}_i $  branch points  $\{\nu^{(i,v_i)}\}_{v_i = 1, \dots, \, \tilde{k}_i}$
   again,  of $l$  different species (or ``colours''), labelled by $i=1, \dots , l$, with nontrivial ramification profiles.
   Like in the preceding case, the weight  $W_{H^l(q)} (\{\nu^{(i,v_i)}\}_{\substack{v_i = 1, \dots, \tilde{k}_i \\ i=1, \dots , l}}, {\bf c} )$
for such a covering consists now of the product of all weights 
$W_{H(q)}(\{\nu^{(i, v_i)}\}_{v_i = 1, \dots, \tilde{k}_i})$, 
for the subsets of different colour  with the weight $m_\lambda({\bf c})$
again given by the monomial symmetric functions evaluated at the parameters ${\bf c}$
\bea
  W_{H^l(q)} (\{\nu^{(i,v_i)}\}_{\substack{v_i = 1, \dots, \tilde{k}_i \\ i=1, \dots , l}},  {\bf c}  )
  &\&:= W_{H^l(q)} (\{\nu^{(i,v_i)}\}_{\substack{v_i = 1, \dots, \tilde{k}_i \\ i=1, \dots , l}} ) \, m_\lambda({\bf c})\\
  W_{H^l(q)}(\{\nu^{(i,v_i)}\}_{\substack{v_i = 1, \dots, \tilde{k}_i \\ i=1, \dots , l}} ) &\&:= 
  \prod_{i=1}^lW_{H(q)}(\{\nu^{(i,v_i)}\}_{v_i = 1, \dots, \, \tilde{k}_i}).
   \eea
We again have the equality
\be
 F^d_{H(q,{\bf c})}(\mu, \nu) =H^d_{H(q, {\bf c})} (\mu, \nu), 
\ee
where
\be
H^d_{H({\bf c}, q)}(\mu, \nu) := z_\nu \sum_{l=0}^{d} {\hskip -10pt}
\sideset{}{'} \sum_{\substack{\{\nu^{(i, v_i)}\} , \ \tilde{k}_i\ge 1, \ \\
\sum_{i=1}^l \sum_{v_i =1}^{\tilde{k}_i}\ell^*(\nu^{(i,v_i)}) = d 
}} 
{\hskip-20 pt}W_{H^l(q)} (\{\nu^{(i, v_i)}\}_{\substack{v_i = 1, \dots, \tilde{k}_i \\ i=1, \dots , l}}, {\bf c}) \ 
H(\{\nu^{(i, v_i)}\}_{\substack{v_i = 1, \dots, \tilde{k}_i \\ i =1, \dots , l}} ,
 \mu, \nu)
\label{Eq_d_c}
\ee
  is the corresponding geometrically defined complete quantum weighted Hurwitz number.

\subsection{Hall-Littlewood polynomials ($q =0$) }

Setting $q=0$ in eq.~(\ref{Pi_xy_qt}), the generating function reduces to the one for Hall-Littlewood polynomials
 \cite[Sec. III.2]{Mac} $P_\lambda({\bf x}, t)$, which satisfy the orthogonality relations
\be
(P_\lambda, P_\mu)_t= \delta_{\lambda \mu} (b_\lambda(t))^{-1},  \quad  b_\lambda (t) := \prod_{i\ge 1} \prod_{k=1}^{m_i(\lambda)} (1 - t^k)
\ee
 with respect to the scalar product $(\  , \ )_t$ defined by
 \be
 (p_\lambda, p_\mu)_t =  \delta_{\lambda \mu} z_\lambda n_\lambda (t), \quad n_\lambda:= \prod_{i=1}^{\ell(\lambda)}  {1\over 1 - t^{\lambda_i}}.
 \ee
 
 Following \cite{Mac}, we define
 \be
 q_\lambda({\bf x}, t) := b_\lambda(t) \prod_{i=1}^{\ell(\lambda)} P_j({\bf x}, t)
 \label{q_lambda}
 \ee
and obtain the following expansion
 \be
L(t, {\bf x}, {\bf y}):= \prod_{i, j}^\infty {1 - t x_i y_j   \over 1- x_i y_j}
=  \sum_{\lambda}^\infty  q_\lambda({\bf x}, t) m_\lambda({\bf y}) 
= \sum_{\lambda}^\infty   q_\lambda({\bf y}, t) m_\lambda({\bf x}).
\label{Lt_yx_t}
\ee
Substituting ${\bf c}=(c_1, c_2, \dots )$  for ${\bf x}$,  and $(\JJ_1, \dots , \JJ_n)$ for ${\bf y}$, we have 
\be
L(t, {\bf c}, z\JJ):=  \prod_{i=1}^\infty \prod_{a=1}^n  {1 - t c_i z \JJ_a   \over 1- c_i z \JJ_a}
=  \sum_{\lambda}^\infty z^{|\lambda|}q_\lambda({\bf c}, t) m_\lambda({\bf \JJ})
 = \sum_{\lambda}^\infty z^{|\lambda|} q_\lambda({\bf \JJ}, t) m_\lambda({\bf c}).
\label{HL_generating_element}
\ee

Applying  $L(t, {\bf c}, z\JJ),\in \Zb(\Cb[S_n])$ to the orthogonal idempotents
$\{F_\lambda\}$ and the cycle-type sums $\{C_\mu\}$ as above,
the corresponding hypergeometric $2D$ Toda $\tau$-function becomes
\bea
\tau^{L(t,{\bf c}, z)}({\bf t}, {\bf s}) &\&= \sum_\lambda r_\lambda^{L(t,{\bf c}, z)} s_\lambda({\bf t}) s_\lambda({\bf s}) \\
 &\&= \sum_{d=0}^\infty  z^d\sum_\lambda F^d_{L(t,{\bf c})}(\mu, \nu)  p_\mu({\bf t}) p_\nu({\bf s})\\
\eea
where 
\be
r_\lambda^{L(t,{\bf c}, z)} := \prod_{(ij) \in \lambda} \prod_{k=1}^\infty  { 1 - t z (j-i) c_k \over 1 - z (j-i) c_k} 
= \prod_{k=1}^\infty (-t)^{|\lambda|} {(- 1/(tz c_k))_\lambda \over (- 1/(z c_k))_\lambda }
\ee
and
\be
F^d_{L(t,{\bf c})}(\mu, \nu) := \sum_{|\lambda|=d}q_\lambda({\bf c}, t) m^\lambda_{\mu \nu}
\ee
is  again the weighted number of paths in the Cayley graph of $S_n$ generated by transpositions,
with weight  $q_\lambda({\bf c},t)$  for a path of signature $\lambda$.

We also have
\be
 F^d_{L(t,{\bf c})}(\mu, \nu) = \sum_{e=0}^d H^{(d,e)}_{L({\bf c})} (\mu, \nu) t^e
\ee
where
\be
H^{(d,e)}_{L({\bf c})}(\mu, \nu) := z_\nu \sum_{l=0}^{d} {\hskip -35 pt}
\sideset{}{'} \sum_{\substack{\{\mu^{(i, u_i)}, \nu^{(i, v_i)}\} , \ k_i\ge 1, \ \tilde{k}_i \ge 1\\
\sum_{i=1}^l \sum_{u_i =1}^{k_i}\ell^*(\mu^{(i,u_i)}) = e, \\  
\sum_{i=1}^l\left( \sum_{u_i =1}^{k_i}\ell^*(\mu^{(i, u_i)} )
+  \sum_{v_ i =1}^{\tilde{k}_ i}\ell^*(\nu^{( i, v_ i)})\right) =d}} {\hskip - 20 pt}
{\hskip-50 pt} (-1)^{K+d-e}
H(\{\mu^{(i, u_i)}\}_{\substack{u_i = 1, \dots, k_i \\ i =1, \dots , l}} ,
\{\nu^{( i, v_ i)}\}_{\substack{v_ i = 1, \dots, \tilde{k}_ i \\  i =1, \dots , l}},  \mu, \nu),
\label{H_de_c}
\ee
with 
\be
K:= \sum_{i=1}^l (k_i +\tilde{k}_i)
\ee
the total number of branch points. $H^{(d,e)}_{\bf c}(\mu, \nu)$  is the weighted generalization of the multispecies hybrid signed Hurwitz numbers studied in \cite{HO2}.
As in the general Macdonald  case,  $H^{(d,e)}_{L({\bf c})}(\mu, \nu)$ is the weighted number of $n$-fold branched coverings of $\Cb \Pb^1$ with a fixed pair of branch points  with ramification profiles $(\mu, \nu)$,  and $K$ additional branch points   divided into  two classes I and II, denoted  $\{\mu^{(i,u_i)}\}_{u_i =1, \dots , k_i}$ and  $\{\nu^{(i, v_ i)}\}_{v_ i= 1, \dots, \tilde{k}_i}$,  respectively, which  are further subdivided into $l $ species, or ``colours'', labelled by  $i=1, \dots  l$,   the elements within each colour group distinguished  by the  labels $(u_i =1, \dots , k_i)$  
  and $(v_ i =1, \dots , \tilde{k}_ i)$.  To such a grouping, we again assign a partition $\lambda$ of length
\be
\ell(\lambda) = l
\ee
and weight
\be
d: = |\lambda| = \sum_{i =1}^l \left (\sum_{u_i =1}^{k_i}\ell^*(\mu^{(i, u_i)})
+ \sum_{v_i =1}^{\tilde{k}_i}\ell^*(\nu^{(i, v_i)}) \right) = \sum_{i=1}^l d_i, 
\ee
whose parts $(\lambda_1\ge \cdots \ge \lambda_l > 0)$ are equal  the total colengths
\be
d_i := \sum_{u_i =1}^{k_i} \ell^*(\mu^{(i, u_i)}) + \sum_{v_i=1}^{\tilde{k}_i}\ell^*(\nu^{(i, v_i)}), \quad i=1. \dots, l
\ee
in weakly decreasing order.

\subsection{Jack polynomials ($ q= t^\alpha, \, t \ra 1$)}
\label{jack}

Setting $q=t^\alpha$ and taking the limit $q \ra 1$, we obtain the Jack polynomials \cite[Sec. VI.10]{Mac}  $P^{(\alpha)}_\lambda$ as the limiting 
case of the MacDonald polynomials. 
These satisfy the orthogonality relations
\be
\langle P^{\alpha}_\lambda, P^{\alpha}_\mu\rangle_\alpha= \delta_{\lambda \mu} z_\lambda (b_\lambda^{(\alpha)})^{-1},  
\quad  b_\lambda^{(\alpha)}:= \prod_{i=1}^{\ell(\lambda)} \prod_{j=1}^{\lambda_i} { \alpha(\lambda_i -j) +\lambda'_j -i +1
\over  \alpha(\lambda_i -j) +\lambda_j -i +\alpha}
\ee
 with respect to the scalar product $\langle \  , \ \rangle_\alpha$ defined by
 \be
 \langle  p_\lambda, p_\mu\rangle_\alpha =  \delta_{\lambda \mu} z_\lambda \alpha^{\ell(\lambda)}. 
 \ee
 
 This corresponds to the family of weight generating functions
\be
J(\alpha, {\bf c}, z) :=\prod_{k=1}^\infty (1 - z c_i)^{-1/\alpha}
\ee
and the corresponding family of central elements
\bea
J(\alpha, {\bf c}, z \JJ) := \prod_{i=1}^\infty \prod_{a=1}^n(1 - z c_i \JJ_a)^{-1/\alpha} 
 = \sum_{\lambda} z^{|\lambda|} g^{\alpha}_\lambda(\JJ) m_\lambda({\bf c})
   = \sum_{\lambda} z^{|\lambda|} g^{\alpha}_\lambda({\bf c}) m_\lambda({\JJ}) ,
\eea
where the symmetric functions $g^{\alpha}_\lambda({\bf x})$ are the analogs of  the $e_\lambda({\bf x})$ or $h_\lambda ({\bf x})$ bases formed from products of  elementary or  complete symmetric functions in the case of Schur functions  ($\alpha=1$),
\be
g^{\alpha}_\lambda({\bf x}) = \alpha^{\ell(\lambda)}\prod_{i=1}^{\ell(\lambda} P^{(\alpha)}_{(\lambda_i)}({\bf x}),
\ee

The content product coefficients entering in the double Schur function 
expansion of the associated hypergeometric $2D$ Toda $\tau$-functions
\be
\tau^{J(\alpha, {\bf c}, z)}({\bf t}, {\bf s}) = \sum_{\lambda} r_\lambda^{J(\alpha, {\bf c}, z)} s_\lambda({\bf t}) s_\lambda({\bf s})
\ee
in this case are
\be
r_\lambda^{J(\alpha, {\bf c}, z)}  = \prod_{(ij)\in \lambda} \prod_{k=0}^\infty (1- z (j-i) c_k)^{-1/\alpha}
= \prod_{k=1}^\infty (1-zc_k)^{|\lambda|\over \alpha} (-1/z c_k)_\lambda^{-1/\alpha}.
\ee
The expansion in the basis of products of power sum  symmetric functions is therefore
\be
\tau^{J(\alpha, {\bf c}, z)} ({\bf t}, {\bf s})
= \sum_{d=0}^\infty \sum_{\substack{\mu, \nu \\ \abs{\mu} =
 \abs{\nu}=n}} z^d F^d_{J(\alpha, {\bf c})}(\mu, \nu) p_\mu({\bf t}) p_\nu({\bf s})
\label{tau_GJ_F}
\ee
where
\be
F^d_{J(\alpha, {\bf c})} (\mu, \nu) = \sum_{\lambda} g^{\alpha}_\lambda({\bf c})m^\lambda_{\mu \nu}
\ee
is the combinatorial Hurwitz number giving the weighted number of  $d$-step paths of signature $\lambda$ in the Cayley
graph of $S_n$ , starting in the conjugacy class $\cyc(\mu)$ and ending in $\cyc(\nu)$, with weight $g^{\alpha)}_\lambda({\bf c})$.

We again have
\be
F^d_{J(\alpha, {\bf c})} (\mu, \nu)  = H^d_{J(\alpha, {\bf c})} (\mu, \nu) ,
\ee
where the weighted  geometrical Hurwitz number is
\be
H^d_{J(\alpha, {\bf c})} (\mu, \nu)  :=  \sum_{k=0}^\infty \left({-{1\over \alpha} \atop k}\right)
  \sum_{\substack{\mu^{(1)}, \dots , \mu^{((k)} \\ |\mu^{(i)}|=n  \\ \sum_{i=1}^k \ell^*(\mu^{(i)})=d }}
 m_\lambda({\bf c}) H(\mu^{(1)}, \dots , \mu^{(k)}, \mu, \nu),
\ee
with the sum over partitions $\lambda$ of length $k$,  and weight $d$ whose parts are $\{\ell^*(\mu^{(1)}), \dots, \ell^*(\mu^{(k)})\}$.

\bigskip

 \bigskip 
\noindent 
  {\small {\it Acknowledgements.}  This work  extends the  approach to the construction
  of parametric families of $\tau$-functions as generating functions for weighted Hurwitz numbers 
initiated jointly with  M. Guay-Paquet and extended to the multispecies case with
A. Yu. Orlov. The author is indebted to both for helpful discussions that helped clarify many of
the ideas  and methods underlying  this approach. }

\bigskip

\newcommand{\arxiv}[1]{\href{http://arxiv.org/abs/#1}{arXiv:{#1}}}

\end{document}